\newcommand{\nodesVX}{
	\node[draw,circle,minimum size=0.45cm, inner sep=0ex] 
	(v) at (0, 0) {\small $v$};
	\node[draw,circle,minimum size=0.45cm, inner sep=0ex] 
	(w1) at (2, 2) {\small $x$};
}
\newcommand{\nodesVXcol}{
	\node[draw,circle,minimum size=0.45cm, inner sep=0ex,nodeset one] 
	(v) at ( 0, 0) {\small $v$};
	\node[draw,circle,minimum size=0.45cm, inner sep=0ex,nodeset one] 
	(w1) at (2, 2) {\small $x$};
}
\newcommand{\nodesVXcolV}{
	\node[draw,circle,minimum size=0.45cm, inner sep=0ex,nodeset one] 
	(v) at ( 0, 0) {\small $v$};
	\node[draw,circle,minimum size=0.45cm, inner sep=0ex] 
	(w1) at (2, 2) {\small $x$};
}
\newcommand{\nodesVXcolX}{
	\node[draw,circle,minimum size=0.45cm, inner sep=0ex,nodeset one] 
	(v) at ( 0, 0) {\small $v$};
	\node[draw,circle,minimum size=0.45cm, inner sep=0ex,nodeset two] 
	(w1) at (2, 2) {\small $x$};
}
\newcommand{\nodesWY}{
	\node[draw,circle,minimum size=0.45cm, inner sep=0ex] 
	(y) at (4, 0) {\small $w$};
	\node[draw,circle,minimum size=0.45cm, inner sep=0ex] 
	(z1) at (2,-2) {\small $y$};
}
\newcommand{\outerNodes}{
	\nodesVX
	\nodesWY
}
\newcommand{\outerNodesColV}{
	\nodesVXcolV
	\nodesWY
}
\newcommand{\outerNodesColX}{
	\nodesVXcolX
	\nodesWY
}
\newcommand{\outerNodesColVX}{
	\nodesVXcol
	\nodesWY
}
\newcommand{\nodesXY}{
	\node[draw,circle,minimum size=0.45cm, inner sep=0ex] 
	(b1) at (2  , 1) {\small $\overline{x}$};
	\node[draw,circle,minimum size=0.45cm, inner sep=0ex] 
	(d1) at (2  ,-1) {\small $\overline{y}$};
}
\newcommand{\nodesXYcol}{
	\node[draw,circle,minimum size=0.45cm, inner sep=0ex, nodeset two] 
	(b1) at (2  , 1) {\small $\overline{x}$};
	\node[draw,circle,minimum size=0.45cm, inner sep=0ex, nodeset one] 
	(d1) at (2  ,-1) {\small $\overline{y}$};
}
\newcommand{\nodesXYcolRev}{
	\node[draw,circle,minimum size=0.45cm, inner sep=0ex, nodeset one] 
	(b1) at (2  , 1) {\small $\overline{x}$};
	\node[draw,circle,minimum size=0.45cm, inner sep=0ex, nodeset two] 
	(d1) at (2  ,-1) {\small $\overline{y}$};
}
\newcommand{\nodesYcol}{
	\node[draw,circle,minimum size=0.45cm, inner sep=0ex, nodeset one] 
	(d1) at (2  ,-1) {\small $\overline{y}$};
}
\newcommand{\nodesYcolRev}{
	\node[draw,circle,minimum size=0.45cm, inner sep=0ex, nodeset two] 
	(d1) at (2  ,-1) {\small $\overline{y}$};
}
\newcommand{\nodesVWcolW}{
	\node[draw,circle,minimum size=0.45cm, inner sep=0ex, nodeset two] 
	(a1) at (1.5, 0) {\small $\overline{v}$};
	\node[draw,circle,minimum size=0.45cm, inner sep=0ex, nodeset one] 
	(c1) at (2.5, 0) {\small $\overline{w}$};
}
\newcommand{\innerNodesV}{
	\nodesXY
	\nodesVWcolW
}
\newcommand{\innerNodesVX}{
	\nodesXYcol
	\nodesVWcolW
}
\newcommand{\innerNodesVXmerged}{
	\nodesYcol
	\nodesVWcolW
}
\newcommand{\innerNodesWX}{
	\nodesXYcolRev
	\nodesVWcolW
}
\newcommand{\innerNodesWXmerged}{
	\nodesYcolRev
	\nodesVWcolW
}
\newcommand{\otherCrossingEdges}{
	\draw[-,gray,dotted, very thick] 
	(-0.25,1.5) to[bend left] (0.5,-2); 	%
	\draw[-,gray,dotted, very thick] 
	(3.25,2) to[bend left] (0.25,2); 		%
	\draw[-,gray,dotted, very thick] 
	(4,2) to[bend right] (3.75,-2.25); 	%
	\draw[-,gray,dotted, very thick] 
	(4,-1.5) to[bend left] (4.25,1.75); 	%
}
\newcommand{\otherEdges}{
	\draw[-,gray,dashed] (v) to (-0.5,1);
	\draw[-,gray,dashed] (v) to (-0.5,-1.5);
	\draw[-,gray,dashed] (v) to (-0.75,-0.5);
	\draw[-,gray,dashed] (v) to (-0.75,0.25);
	\draw[-,gray,dashed] (y) to (4.5,1.25);
	\draw[-,gray,dashed] (y) to (4.5,-1);
	\draw[-,gray,dashed] (y) to (4.75,0.5);
	\draw[-,gray,dashed] (z1) to (3,-2.5);
	\draw[-,gray,dashed] (z1) to (0.75,-2.5);
	\draw[-,gray,dashed] (w1) to (0.75,2.5);
}
\tikzset{snake it/.style={decorate, decoration={snake,amplitude=.75mm,segment length=3mm,post length=0mm}}}
\tikzset{cut/.style={snake it,red,line width=2pt,font={\bfseries}}}%
\tikzset{double/.style={blue,densely dotted, very thick}}
\tikzset{inCut/.style={black!80!white, decorate, decoration={snake,amplitude=.5mm,segment length=1.5mm,post length=0mm}}}
\tikzset{inCutGray/.style={gray, decorate, decoration={snake,amplitude=.5mm,segment length=1.5mm,post length=0mm}}}
\tikzset{inCutDouble/.style={blue,densely dotted, thick, decorate, decoration={snake,amplitude=.5mm,segment length=1.5mm,post length=0mm}}}
\tikzset{edge node/.style={rectangle,minimum height=0mm,inner sep=.5mm,align=center,fill=white}}
\tikzset{nodeset one/.style={fill=green!30}}
\tikzset{nodeset two/.style={fill=blue!30}}
\tikzset{colCutEdge/.style={blue!90!black}}
\tikzset{cut edge/.style={colCutEdge, ultra thick}}
\tikzset{colSkEdge/.style={red!90!black}}
\tikzset{skewness edge/.style={colSkEdge, ultra thick}}
\tikzset{example/.style={}}
\tikzset{example node/.style={circle,inner sep=0mm,minimum size=0.8cm,thick,draw,font={\huge}}}
\tikzset{example edge/.style={draw=black,very thick}}
\tikzset{crossing edge node/.style={pos=0.4}}
\tikzset{example edge node/.style={rectangle,minimum height=4mm,inner sep=.5mm,align=center,fill=white,scale=1.4}}
\newcommand{\PFMC}{\textsc{PF-Max-Cut}\xspace}
\newsavebox\myboxA
\newsavebox\myboxB
\newlength\mylenA
\newcommand*\xoverline[2][0.75]{%
    \sbox{\myboxA}{$\m@th#2$}%
    \setbox\myboxB\null%
    \ht\myboxB=\ht\myboxA%
    \dp\myboxB=\dp\myboxA%
    \wd\myboxB=#1\wd\myboxA%
    \sbox\myboxB{$\m@th\overline{\copy\myboxB}$}%
    \setlength\mylenA{\the\wd\myboxA}%
    \addtolength\mylenA{-\the\wd\myboxB}%
    \ifdim\wd\myboxB<\wd\myboxA%
       \rlap{\hskip 0.5\mylenA\usebox\myboxB}{\usebox\myboxA}%
    \else
        \hskip -0.5\mylenA\rlap{\usebox\myboxA}{\hskip 0.5\mylenA\usebox\myboxB}%
    \fi}
\newcommand{\N}[1]{\xoverline{#1}}
\newcommand{\pfmc}{\mathtt{MaxCut_{pf}}}
\newcommand{\MC}{\textsc{Max-Cut}\xspace}
\newcommand{\MCP}{\textsc{Max-Cut} problem\@\xspace}
\newcommand{\MCA}{\textsc{Max-Cut} algo\-rithm\@\xspace}
\newcommand{\nodeCut}{S}	%
\newcommand{\Cut}{\delta(\nodeCut)}	%
\newcommand{\crg}{\mathrm{cr}}
\newcommand{\mcr}{\mathrm{mcr}}
\newtheorem{thm}{Theorem}
\newtheorem{lem}[thm]{Lemma}
\newtheorem{defi}[thm]{Definition}
\newtheorem{cor}[thm]{Corollary}
\newcommand{\nover}[1]{\begingroup%
	#1\endgroup}
\newcommand{\changed}[1]{\begingroup%
	#1\endgroup}
\begin{document}
\doi{} %
\Issue{0}{0}{0}{0}{0} %
\HeadingAuthor{} %
\HeadingTitle{} %
\title{Maximum Cut Parameterized by \\ Crossing Number} %
\Ack{} %

\author[1]{Markus Chimani}{markus.chimani@uni-osnabrueck.de}%
\author[2]{Christine Dahn}{christine.dahn@cs.uni-bonn.de}%
\author[1]{Martina Juhnke-Kubitzke}{juhnke-kubitzke@uni-osnabrueck.de}
\author[3]{Nils M.~Kriege}{nils.kriege@cs.tu-dortmund.de}%
\author[2]{Petra Mutzel}{petra.mutzel@cs.uni-bonn.de}%
\author[1]{Alexander Nover}{anover@uni-osnabrueck.de}

\affiliation[1]{School of Mathematics/Computer Science, University Osnabrück, Germany\\
\texttt{\{markus.chimani,juhnke-kubitzke,anover\}@uni-osnabrueck.de}} %
\affiliation[2]{Institute for Computer Science, University of Bonn, Germany\\
\texttt{\{christine.dahn,petra.mutzel\}@cs.uni-bonn.de}} %
\affiliation[3]{Department of Computer Science, TU Dortmund University, Germany\\
\texttt{nils.kriege@cs.tu-dortmund.de}} %

\submitted{August 2019}%
\reviewed{}%
\revised{}%
\accepted{}%
\final{}%
\published{}%
\type{}%
\editor{}%

\maketitle

\setcounter{page}{1}

\begin{abstract}
Given an edge-weighted graph $G$ on $n$ nodes, 
the NP-hard \MCP asks for a node bipartition such that the sum of edge weights joining the different partitions is maximized.
We propose a fixed-parameter tractable algorithm parameterized by the number $k$ of crossings in a given drawing of $G$.
Our algorithm achieves a running time of $\mathcal{O}(2^{k} \cdot p(n+k))$, where $p$ is the polynomial running time for planar \MC.
The only previously known similar algorithm~\cite{DKM18} is restricted to embedded 1-planar graphs (i.e., at most one crossing per edge)
and its dependency on $k$ is of order $3^k$.
Finally, combining this with the fact that crossing number is fixed-parameter tractable with respect to\ itself, we see that \MC is fixed-parameter tractable with respect to\ the crossing number, even without a given drawing.
\nover{Moreover, the results naturally carry over to the minor-monotone-version of crossing number.}

 \end{abstract}

\bigskip

\section{Introduction}

Cut problems in graphs are a well-established class of problems attracting interest since the beginning of modern algorithmic research.
Given an edge-weighted undirected graph, the \MCP 
asks for a node partition into two sets, such that the sum of the weights of the edges between the partition sets is maximized.  
The problem is getting increasing attention in the literature due to its applicability to various scenarios:
these range from $\ell^1$-embeddability~\cite{app1}, to the layout of electronic circuits \cite{BarahonaGJR88,DDJMRR1995},
to solving Ising spin glass models, which are of high interest in physics~\cite{Barahona1982}.
Besides the theoretical merits, such models need to be solved in adiabatic quantum computation~\cite{McGeoch2014}.
Furthermore, quadratic unconstrained binary optimization (QUBO)
problems can be solved via \MC.
Many combinatorial optimization problems can be stated in the form of QUBO such as multicommodity-flow problems, maximum clique, vertex cover, scheduling,  and many others.
Also see~\cite{app2,app1} for a more in-depth overview on applications.

The \MCP has been shown to be NP-hard for general graphs~\cite{Karp72}.
Papadimitriou and Yannakakis~\cite{PapadimitriouY1991} have shown that the \MCP is even APX-hard, i.e., there does not exist a polynomial-time approximation scheme unless P=NP.
Goemans and Williamson proposed a randomized constant-factor approximation algorithm~\cite{GoemansW1995}, which has been derandomized by Mahajan and Ramesh~\cite{MahajanR1999}, 
achieving a ratio of~0.87856.
Several special cases of the problem allow polynomial algorithms:
If the weights of all edges are negative we obtain a \textsc{Min-Cut} problem, which 
can be solved, e.g., via network flow.
Other special cases are, e.g., graphs without long odd cycles~\cite{GrotschelN84} or weakly bipartite graphs~\cite{GroetschelPulleyblank1981}.
The case of planar input graphs is of particular interest.
Orlova and Dorfman~\cite{OrlovaD72} and Hadlock~\cite{Hadlock75} have shown how to solve \MC in polynomial time for unweighted planar graphs.
Those algorithms can be extended to weighted planar graphs;
the currently fastest algorithms for the weighted case have been suggested by Shih et al.~\cite{ShihWuKuo90} and by Liers and Pardella~\cite{LiersP12},
and achieve a running time of $p(n)=\mathcal{O}(n^{3/2}\log n)$ on planar graphs with $n$ nodes.
Barahona has shown that the planarity condition can be relaxed to graphs not contractible to $K_5$~\cite{Barahona83}.

\nover{
Similarly, it has been shown that \MC can be solved in polynomial time if the graph can be embedded on a surface of constant genus $g$~\cite{Galluccio1998O,GalluccioLV01}. 
However,  the edge-weights have to be restricted to integers whose absolute values are bounded by a polynomial in the input.
The central idea of this algorithm is to write the generating function of cuts as a linear combination of $4^g$ Pfaffians.
This algorithm is in fact highly non-trivial to realize: In order to obtain an implementable algorithm, \cite{GalluccioLV01} reports on a scheme, which depends on the existence of sufficiently many prime numbers within a given interval, which cannot be guaranteed in general.
}

A graph is \emph{1-planar} if it allows a drawing where each edge is involved in at most one crossing. A \emph{1-plane} graph is such a graph, together with an embedding realizing this property.
The \MCP on 1-plane graphs with $k$ edge crossings has recently been shown to be \emph{fixed-parameter tractable} (FPT) with parameter $k$~\cite{DKM18}.
More precisely, it was shown that such instances can be solved in $\mathcal{O}(3^k\cdot p(n))$ time, where $p(n)$ is the running time of a polynomial-time \MCA\ on planar graph with $n$ nodes, e.g., \cite{LiersP12, ShihWuKuo90}. There are no restrictions on the edge weights.

\paragraph*{Our contribution.}
In this paper, we improve on the latter result in several ways: 
Firstly, we drop the requirement of 1-planarity, i.e., we consider graphs that
can be drawn with at most $k$ crossings (even if multiple such crossings lie on the same edge).
We therefore handle the case of the well-established notion of the graph's \emph{crossing number}. 
Secondly, we reduce the runtime dependency on $k$ from $3^k$ to $2^k$.
Finally, unlike the previous result, our approach can be extended to an 
FPT algorithm which does not even require a crossing-realizing drawing as an input; 
however, this increases the %
running time and requires a deep algorithm from the literature as a black box \cite{Grohe04,KR07}.
Interestingly, we achieve these results %
by a \emph{simpler} approach (compared to \cite{DKM18}).
\nover{Comparing our algorithm with~\cite{GalluccioLV01}, we have no restrictions on the edge weights. 
Even in the restricted scenario, our algorithm is faster for graphs whose crossing number is at most twice its genus.
Furthermore, we require only easily-implementable data structures and subalgorithms (if we are given a crossing-realizing drawing), compared to advanced methods from algebra.}

The general idea of our algorithm is to recursively get rid of each crossing, each time resulting in two new subinstances.
We end up with a set of up to $2^k$ planar graphs, each of which can be solved using a known polynomial-time \MCA for planar graphs.
The maximum over all these subinstances then yields a maximum cut in the original instance.

\nover{Finally, we consider parameterizing the problem by the \emph{minor} crossing number (see below for details). This measure is always at most the graph's crossing number. While the exponential dependency on the respective parameter is identical, the running time only slightly increases in its polynomial part.}

\changed{
\paragraph*{Independent work.}
Kobayashi et al.\cite{KKMT19I} independently and simultaneously obtained 
another fixed-parameter tractable algorithm for \MC parameterized by crossing number achieving the same running time.
However, while we can always stay in the realm of maximum cuts when solving subinstances, they have to consider maximum weighted $b$-factor problems. 
Their preprint was uploaded to arXiv shortly after ours~\cite{CDJKMN19,KKMT19}.
}

\paragraph*{Organization of the paper.} Section \ref{se:preliminaries} recapitulates the basic definitions for cuts and crossings.
In Section~\ref{se:algorithm}, we present our new algorithm and prove its correctness and running time.
\nover{Section~\ref{se:minor} extends the results to the minor crossing number case.}
We end with a conclusion and open problems in Section~\ref{se:conclusion}.

\section{Preliminaries}
\label{se:preliminaries}

Throughout this paper we consider undirected edge-weighted graphs.
The input for our \MCP is 
a graph $G=(V,E,c)$, where $c_e\in \mathbb R$ denotes the (positive or negative) weight of edge $e$, for each edge $e\in E$. %
A partition of the nodes $V$ into two sets $S \subseteq V$ and $\overline{S}=V\setminus S$
defines the \emph{cut} $\Cut=\{uv \in E \mid u\in S \Leftrightarrow v\not\in S \}$.
The \emph{value} $c(\Cut) = \sum_{e\in \Cut} c_{e}$ of a cut is the sum of all edge weights of the edges in the cut.
Given $G$, the \MCP asks for a cut with highest value.
Since a graph can have multiple cuts of equal value, only the value of a maximum cut is unique, not the cut itself.

A \emph{non-degenerated drawing} of a graph in the plane is a map of its nodes to distinct points in $\mathbb{R}^2$, and a map of its edges
to curves connecting the respective endpoints, not including the points of any other node. Any point mapped to the plane 
either corresponds to a graph node, or is contained in at most two edge curves.
A shared non-endpoint 
between two curves is called a \emph{crossing}. 

A graph is \emph{planar} if it admits a drawing without any crossings.
It is well known that planarity can be tested in linear time \cite{HopcroftT73}.
For non-planar graphs it is natural to ask for a drawing with as few crossings as possible.
The smallest such number is the \emph{crossing number} $\crg(G)$ of $G$.
Not only is it NP-hard to compute $\crg(G)$~\cite{GareyJohnson83}, but even the so called \emph{realizability} problem
turns out to be NP-hard~\cite{Kratochvil91}: Given a graph $G$ and a set $X$ of edge pairs, is there a drawing $\mathcal{D}$ of $G$ such that
$X$ contains an edge pair if and only if the pair's two edge curves cross in $\mathcal{D}$?
The key problem in testing realizability is that it is hard to figure out whether there exist \emph{orderings}
of the crossings along the respective edges that allow the above properties.

Therefore, sometimes more restricted crossing variants are considered. For example, \emph{1-planar} graphs admit drawings
where every edge is involved in at most one crossing. 
Not all graphs can be drawn in such a way, since 1-planar graphs can have at most $4|V|-8$ edges; %
also, the 1-planar number of crossings is in general larger than $\crg(G)$~\cite{PT,CKMV19}.

For a general drawing (not necessarily 1-plane), we typically encode its crossings as a \emph{crossing configuration} $\mathcal{X}$.
Therein, we not only store the pairs of edges that cross, but for each edge also the order of the crossings as they occur along its curve.
\changed{The feasibility of a crossing configuration can be tested in time linear in $|V|+|\mathcal{X}|$ by replacing crossings with dummy nodes of degree 4, testing planarity, and checking the cyclic order around dummy nodes.}%
\footnote{\changed{In general, a specified edge pair may not really cross but merely ``touch''; this is trivial to detect after testing planarity by checking the cyclic order of the edges around the dummy node. 
Given such a ``flawed'' configuration, we trivially obtain one with less crossings by removing such crossing pairs from $\mathcal{X}$.%
}}
Although we will not require this fact in the following, this also allows us to efficiently deduce a drawing that
respects~$\mathcal{X}$.
It is well understood that we can restrict ourselves to \emph{good drawings} when considering the (traditional) crossing number
of graphs: adjacent edges never cross and no edge pair crosses \changed{more than once}.

\section{Algorithm}
\label{se:algorithm}
Our main idea for computing the maximum cut in an embedded weighted graph is to eliminate its crossings one by one. In the end, we use a \MCA for planar graphs.
We first introduce a slight variant of \MC: %
\begin{defi}[Partially-Fixed Maximum Cut, \PFMC]\ 
 Given an edge weighted graph $G=(V,E,c)$ and a set of \emph{fixed} edges $F\subseteq E$, find a cut of maximum value that contains all elements of $F$.
\end{defi}
A cut is \emph{feasible} if it contains $F$. A \PFMC instance is \emph{infeasible} if it does not allow a feasible cut. It is easy to
see that an instance is infeasible if and only if $F$ contains a cycle of odd length.
We denote a maximum objective value by $\pfmc(G,F)$, and let $\pfmc(G,F)=-\infty$ for infeasible instances.

\changed{Observe that (as for \MC) we do not need to consider a given crossing configuration $\mathcal{X}$ as part of the problem description (see Corollary~\ref{crFPTcor}). However, since having $\mathcal{X}$ allows for simplifications and a better running time, we will for now assume that we are given the graph together with a crossing configuration $\mathcal{X}$.\footnote{\changed{As noted above, we may assume that no such $\mathcal{X}$ ever specifies ``touching points''; we can reduce such configurations whenever our algorithm retrieves a new crossing configuration.}}
We will explain later how to remove this assumption in Corollary~\ref{crFPTcor}.
}

Given any edge $vw$ with weight $c_{vw}$ in a \PFMC instance, 
we define the operation to \emph{bisubdivide $vw$ at $v$} as follows:
Subdivide $vw$ twice, i.e., replace $vw$ by a path of length $3$ with two new degree-2 nodes. 
We denote the new node incident to $v$ or $w$ by $\N{v}$ or $\N{w}$, respectively. We consider the notation $\N{\ \cdot \ }$ an operand.\footnote{Observe that per recursion step, we will bisubdivide at most one edge per incident node (recall that adjacent edges never cross in good drawings). Thus, the above simple notation is unambiguous. In the graphs of the subproblems, see below, we may assume the nodes to be named afresh, and thus we may again perform bisubdivisions without creating notational ambiguity.}

The edges $v\N{v}$ and $\N{v}\N{w}$ have weight $0$, $\N{w}w$ retains the weight $c_{vw}$.
Furthermore, we add $v\N{v},\N{v}\N{w}$ to $F$, and if $vw\in F$, we replace it in $F$ by $\N{w}w$.
Clearly, both $v\N{v},\N{v}\N{w}$ will be in any feasible cut;
node $\N{w}$ will always lie in the same partition set as $v$, and $\N{v}$ in the other (cf. Figure \ref{fig:G'}). Most importantly this gives:

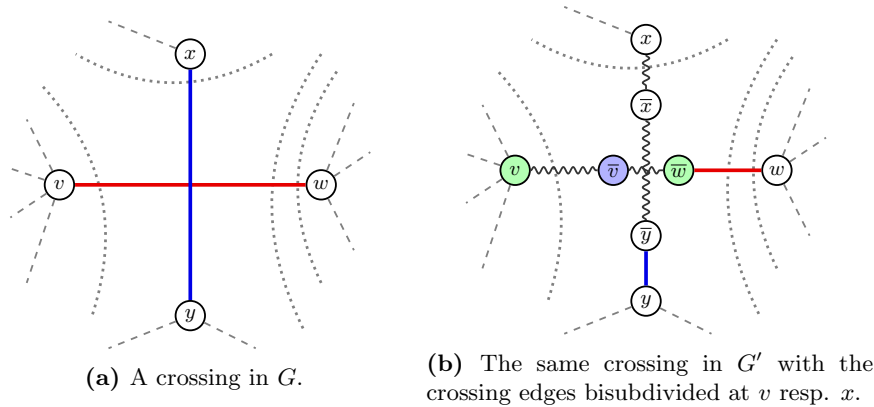
\begin{figure}[tb]
	\centering
	\begin{subfigure}{.49\linewidth}
		\centering		
		\scalebox{0.87}{%
\begin{tikzpicture}
\begin{scope} [thick] 
	\outerNodes		
	\otherCrossingEdges
	\otherEdges
		
	\draw[skewness edge]  (v) -- (y); 
	\draw[cut edge] (w1) -- (z1);
\end{scope}
\end{tikzpicture} %
} 
		\caption{A crossing in $G$.}\label{fig:G}
	\end{subfigure}
	\begin{subfigure}{.49\linewidth}
		\centering		
		\scalebox{0.87}{%
\begin{tikzpicture} 
\begin{scope} [thick] 
	\outerNodesColV	
	\otherCrossingEdges	
	\innerNodesV
	\otherEdges
	
	\draw[inCut] (v) -- (a1);
	\draw[inCut] (a1) -- (c1);
	\draw[skewness edge]  (c1) -- (y); 
	
	\draw[inCut] (w1) -- (b1);
	\draw[inCut] (b1) -- (d1);
	\draw[cut edge] (d1) -- (z1);
\end{scope}
\end{tikzpicture} %
}
		\caption{The same crossing in $G'$ with the crossing edges bisubdivided at $v$ resp. $x$.}\label{fig:G'}
	\end{subfigure}
	\caption{The situation at a crossing between $vw$ and $xy$ in $G$. In $G'$, the two edges of the crossing are bisubdivided at $v$ and $x$, respectively, and the zig-zag edges are added to the set of fixed edges~$F'$.
		As an example, the node coloring at $v,\N{v},w$ gives a partition of these nodes that is forced by the respective newly added edges in $F'$. (Dashed and dotted edges show examples of other edges in $G$, resp.\ $G'$.)}
	\label{fig:construction}
\end{figure}

\begin{lem}\label{bisubinvar}
The feasible cuts in an original \PFMC instance $\langle G,F\rangle$ are in 1-to-1 correspondence to feasible cuts of equal value in a bisubdivided instance $\langle G',F'\rangle$.
\end{lem}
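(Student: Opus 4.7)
The plan is to exhibit an explicit bijection between feasible cuts of $\langle G,F\rangle$ and feasible cuts of $\langle G',F'\rangle$, and to show it preserves cut values. Fix the edge $vw$ that is bisubdivided at $v$, introducing nodes $\N{v},\N{w}$ and edges $v\N{v},\N{v}\N{w},\N{w}w$, of which the first two are added to $F'$ (and have weight $0$), and the third inherits the weight $c_{vw}$ (and replaces $vw$ in $F'$ if $vw\in F$).

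The central observation is that every feasible cut $\nodeCut'$ of $\langle G',F'\rangle$ is completely determined on $\{v,\N{v},\N{w}\}$ by the side containing $v$: since $v\N{v}\in F'$, $\N{v}$ must lie on the opposite side of $v$, and since $\N{v}\N{w}\in F'$, $\N{w}$ must lie on the same side as $v$. Thus if I define $\Phi(\nodeCut):=\nodeCut\cup\{\N{w}\mid v\in\nodeCut\}\cup\{\N{v}\mid v\notin\nodeCut\}$ for any cut $\nodeCut$ of $G$, and inversely $\Psi(\nodeCut'):=\nodeCut'\cap V(G)$, then $\Psi\circ\Phi=\mathrm{id}$ and $\Phi\circ\Psi=\mathrm{id}$ on feasible cuts, giving the bijection.

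Next I would verify feasibility in both directions. For $\Phi$: the two forced edges $v\N{v},\N{v}\N{w}$ are in $\delta(\Phi(\nodeCut))$ by construction; moreover, $\N{w}w\in\delta(\Phi(\nodeCut))$ iff $v$ and $w$ lie on opposite sides iff $vw\in\delta(\nodeCut)$. Therefore, if $vw\in F$ (so $\N{w}w\in F'$), feasibility of $\nodeCut$ transfers to $\Phi(\nodeCut)$; all other edges of $F$ are untouched and remain covered. The reverse direction for $\Psi$ is symmetric, using the same equivalence $\N{w}w\in\delta(\nodeCut')\iff vw\in\delta(\Psi(\nodeCut'))$ to handle both the case $vw\in F$ and the general cut.

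Finally, I would check the value equality. The edges $v\N{v}$ and $\N{v}\N{w}$ carry weight $0$, so their presence in the cut contributes nothing. All edges of $G$ other than $vw$ are identical in $G'$ and appear in $\delta(\Phi(\nodeCut))$ exactly when they appear in $\delta(\nodeCut)$. Finally, $\N{w}w$ carries weight $c_{vw}$ and, by the equivalence above, is in $\delta(\Phi(\nodeCut))$ iff $vw\in\delta(\nodeCut)$; hence $c(\delta(\Phi(\nodeCut)))=c(\delta(\nodeCut))$. There is no real obstacle here; the only point that needs care is to state the forcing chain $v\leftrightarrow\N{v}\leftrightarrow\N{w}$ cleanly so that the bijection and the case $vw\in F$ are handled uniformly without double-counting. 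One can then iterate this lemma over all bisubdivisions performed in a single recursive step.
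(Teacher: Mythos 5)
Your proposal is correct and follows essentially the same route as the paper: the paper's proof uses exactly your maps (restricting a feasible cut of $G'$ to $V(G)$, and extending a feasible cut of $G$ by $\N{w}$ or $\N{v}$ depending on the side of $v$), with the same observation that $\N{w}w\in\delta(S')$ iff $vw\in\delta(S)$ yielding feasibility and equality of values. Your explicit remark that feasibility forces the placement of $\N{v},\N{w}$, making the correspondence a genuine bijection, is a slightly more careful statement of what the paper leaves implicit.
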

\begin{proof}
Let $vw$ be the edge in $G$ that is bisubdivided at $v$ to obtain $\langle G',F'\rangle$.
By construction, we know that both edges $v\N{v}$, $\N{v}\N{w}$ have cost $0$ and are in $F'$, and thus in any $F'$-feasible cut. Consequently, in any $F'$-feasible cut, $v$ and $\N{w}$ will lie in a 
common partition set. 
Let $S'\subset V(G')$ be a node subset that induces some feasible (with respect to $F'$) cut in $G'$.
Then, the node set $S=S'\setminus\{\N{v},\N{w}\}$ induces a feasible (with respect to\ $F$) cut in $G$. 
Cut $\delta(S)$ contains edge $vw$ if and only if $\delta(S')$ contains $\N{w}w$. Since both these edges have identical cost, the total costs of both cuts are equal.

Inversely, let $S\subset V(G)$ be a node subset that induces some feasible (with respect to $F$) cut in $G$. 
Then, consider the cut in $G'$ induced
by $S'=S\cup\{s\}$, where $s=\N{w}$ if $v\in S$, and $s=\N{v}$ otherwise. Both fixed edges $v\N{v}$, $\N{v}\N{w}$ are in $\delta(S')$ and the cut is thus feasible. Again, 
 $\delta(S')$ contains edge $\N{w}w$ if and only if $\delta(S)$ contains $vw$, and both cut values are thus equal.
\end{proof}

When we \emph{identify} two nodes $a,b$ in a graph with one another, they become a common entity that is incident to all of their former neighbors.
We will only identify nodes that are neither adjacent nor share neighbors. 
When identifying nodes in $G$ of some \PFMC instance $\langle G,F\rangle$,
the set $F$ is retained, subject to replacing the edges formerly incident to $a$ or $b$ with their new counterparts.

\medskip
We are now ready to describe our algorithm.
We are given a \MC instance $G=(V,E,c)$, together with some crossing configuration $\mathcal{X}$ with $k$ crossings.
Let $F=\emptyset$ be the set of fixed edges and consider $\langle G,F\rangle$ as a \PFMC instance.
From $\langle G,F,\mathcal{X}\rangle$, we pick a crossed edge $vw$, and derive two new triplets $T_i=\langle G_i,F_i,\mathcal{X}_i\rangle$, for $i\in\{v,w\}$.
Both derived crossing configurations $\mathcal{X}_i$ attain at most $k-1$ crossings and we can call our algorithm recursively on $T_v$ and $T_w$.
As a base case, the derived graphs become planar and (after a preprocessing to deal with the fixed edges) we apply an efficient \MCA for planar graphs.
The solutions of $\langle G_i,F_i\rangle$, for $i\in\{v,w\}$, yield a solution of $\langle G,F\rangle$.
Observe, however, that $\langle G_i,F_i\rangle$ may become infeasible.\\

Let us describe this recursion step formally (cf.\ also Figures~\ref{fig:construction} and~\ref{fig:construction2}).
We define the \emph{crossing split} operation that, given a triplet $\langle G,F,\mathcal{X}\rangle$, yields the two triplets $T_v$ and $T_w$:
Let $\langle G=(V,E,c),F\rangle$ be a \PFMC instance and $\mathcal{X}$ a crossing configuration of $G$.
Consider a crossing $\chi\in\mathcal{X}$ with crossing edges $vw$ and $xy$.
For $j\in\{v,w,x,y\}$, let $Y_j$ be the \textit{ordered sets of crossings} in $\mathcal{X}$ between $j$ and $\chi$ (cf.\ the dotted edges in Figure \ref{fig:construction}: 
\changed{e.g., the crossings between the two dotted edges and $\N{w}w$ in Figure \ref{fig:G'} are in $Y_w$ as they are between $\chi$ and $w$ in Figure \ref{fig:G}).}
Let the intermediate instance $\langle G',F'\rangle$ be obtained from $\langle G,F\rangle$ by bisubdividing $vw$ at $v$ and bisubdividing $xy$ at~$x$.
For $i\in\{v,w\}$, let $\langle G_i,F_i\rangle$ be the \PFMC instance obtained from $\langle G',F'\rangle$ by identifying $\N{x}$ with $\N{i}$ \changed{(see Figures \ref{fig:G'_x_merged} and \ref{fig:G'_y_merged})}.
\changed{Intuitively, the two graphs obtained by the identifications represent the two possibilities whether $x$ is on the same side of the cut as $v$ or not.}
We obtain a corresponding crossing configuration $\mathcal{X}_i$ from $\mathcal{X}$ by removing $\chi$ and placing the crossings $Y_j$ (retaining their order) on the edge $j\N{j}$, for all $j\in\{v,w,x,y\}$. 
The triplets $T_v=\langle G_v,F_v,\mathcal{X}_v\rangle$ and $T_w=\langle G_w,F_w,\mathcal{X}_w\rangle$ are the results of the crossing split operation with respect to\ $\langle \chi,vw,xy\rangle$.

\begin{lem}\label{lem:recurse}
 Let $\langle G=(V,E,c),F\rangle$ be a \PFMC instance and $\mathcal{X}$ a crossing configuration of $G$ with $k$ crossings.
 Let $\chi\in\mathcal{X}$ be any crossing with some crossing edges $vw$ and $xy$, and consider the crossing split operation with respect to\ $\langle \chi,vw,xy\rangle$.
 For $i\in\{v,w\}$, let $\langle G_i,F_i,\mathcal{X}_i\rangle$ be the resulting triplets.
 Then we have:
 \begin{enumerate}
  \item for $i\in\{v,w\}$, $\mathcal{X}_i$ is a feasible crossing configuration for $G_i$ with at most $k-1$ crossings; and
  \item $\pfmc(G,F) = \max_{i\in\{v,w\}} \{\ \pfmc(G_i,F_i)\ \}$.
 \end{enumerate}
\end{lem}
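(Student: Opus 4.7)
The plan is to handle the two claims separately: (1) by a local geometric surgery at the crossing $\chi$, and (2) by analyzing the partition sides forced by $F'$ and combining them with two applications of Lemma~\ref{bisubinvar}.

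For (1), I will start from a drawing realizing $\mathcal{X}$ and modify it only inside a small disk around $\chi$. I place $\N{v}$ (resp.\ $\N{w}$) on the curve of $vw$ just on the $v$-side (resp.\ $w$-side) of $\chi$, and $\N{x},\N{y}$ analogously on $xy$; this realizes $\langle G',F'\rangle$ while preserving every crossing, including $\chi$, which now lies between $\N{v}\N{w}$ and $\N{x}\N{y}$. To realize $\mathcal{X}_i$ I continuously slide $\N{v}$ and $\N{x}$ along the respective sub-arcs toward the point of $\chi$ and identify them there; the four remaining sub-arcs then meet at a single degree-$4$ vertex, which eliminates $\chi$ without creating any new crossing, and the crossings of each $Y_j$ still sit on $j\N{j}$ in their original order. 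Thus $\mathcal{X}_i$ is a feasible crossing configuration with exactly $k-1$ crossings.

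For (2), let $\langle G',F'\rangle$ denote the intermediate bisubdivided instance. Two applications of Lemma~\ref{bisubinvar} give $\pfmc(G,F)=\pfmc(G',F')$, so it suffices to show $\pfmc(G',F')=\max_{i\in\{v,w\}}\pfmc(G_i,F_i)$. For the ``$\le$'' direction, fix any $F'$-feasible cut $\delta(S')$. The fixed edges $v\N{v},\N{v}\N{w}$ force $\N{v}$ opposite $v$ and $\N{w}$ on the side of $v$, and $x\N{x},\N{x}\N{y}$ analogously force $\N{x}$ opposite $x$ and $\N{y}$ on the side of $x$. If $\N{v}$ and $\N{x}$ lie on the same side, contracting them onto that side produces a cut of $G_v$ in which all four fixed edges around the merged node ($v\N{v},\N{v}\N{w},x\N{v},\N{v}\N{y}$) together with the edges inherited from $F$ are cut; the value is preserved because those four weight-$0$ edges replace the corresponding weight-$0$ edges of $\langle G',F'\rangle$. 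If instead $\N{v}$ and $\N{x}$ lie on opposite sides, then $\N{w}$ and $\N{x}$ are on the same side (both opposite $\N{v}$), and the symmetric construction yields a feasible cut in $G_w$ of equal value.

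For the ``$\ge$'' direction, any feasible cut of $G_i$ lifts back to a feasible cut of $G'$ of equal value by splitting the merged node: for $i=v$ both $\N{v}$ and $\N{x}$ are placed on its current side, while for $i=w$ the copy $\N{x}$ is placed opposite $\N{v}$. In both cases feasibility of the four fixed edges is immediate and the weight-$0$ argument gives equal values. The main bookkeeping obstacle is verifying carefully that the two cases truly exhaust the feasible cuts of $G'$ and that the fixed-edge sides after identification align with $F_i$; the geometric argument for (1) is transparent once the surgery is confined to a small neighborhood of $\chi$.
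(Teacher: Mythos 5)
Your proposal is correct and follows essentially the same route as the paper's proof: claim (1) via a local redrawing confined to a neighborhood of $\chi$ that absorbs the crossing into the identified degree-4 node while keeping each $Y_j$ on $j\N{j}$ (you should state the analogous slide of $\N{w}$ and $\N{x}$ for $i=w$, as the paper does), and claim (2) via the case distinction on which side of the cut $\N{v}$ and $\N{x}$ (equivalently $v$ and $x$) lie, combined with two applications of Lemma~\ref{bisubinvar}. Your two-directional value-preserving correspondence between feasible cuts of $\langle G',F'\rangle$ and of $\langle G_v,F_v\rangle$, $\langle G_w,F_w\rangle$ is in fact spelled out in more detail than in the paper, and it settles the "bookkeeping obstacle" you mention, since the two cases on the side of $\N{x}$ relative to $\N{v}$ are exhaustive.
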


\begin{proof}
Consider any drawing $\mathcal{D}$ of $G$ realizing $\mathcal{X}$.
By routing the new paths ($v\N{v},\N{v}\N{w},\N{w}w$ resp. $x\N{x},\N{x}\N{y},\N{y}y$) along the curves of their original edges ($vw$ resp. $xy$) 
we obtain a drawing $\mathcal{D}'$ of $G'$ from $\mathcal{D}$.
Thereby, for  $j\in\{v,w,x,y\}$, we place the new nodes $\N{j}$ in a close neighborhood of $\chi$ on the curve segment between $j$ and $\chi$, so that $\N{x}\N{y}$ is only crossed by $\N{v}\N{w}$ and vice versa. Note that
the number of crossings in $\mathcal{D}'$ is equal to that of $\mathcal{D}$,
since all crossings in $Y_j$ in $\mathcal{D}$ are transferred to the edge $j\N{j}$ in $\mathcal{D}'$, for all $j\in \{v,w,x,y\}$, and 
the original crossing $\chi$ between $xy$ and $vw$ in $\mathcal{D}$ has a counterpart $\chi'$ in $\mathcal{D}'$ between the edges $\N{v}\N{w}$ and $\N{x}\N{y}$.
Since the edges $\N{v}\N{w}$ and $\N{x}\N{y}$ are crossing free except for $\chi'$, we can follow (in a close neighborhood) the curves of $\N{v}\N{w}$ from any of its end points up to
$\chi'$, and onwards from there along the curve of $\N{v}\N{w}$ to any of its end points. Since these routes are crossings-free, we call them \emph{free routes}.
When we now identify $\N{x}$ with $\N{v}$, we can locally redraw
our drawing such that $\chi'$ vanishes and no other crossings arise, see Figure~\ref{fig:G'_x_merged}. 
Observe that $\N{x}$ has precisely two neighbors: $\N{y}$ and $x$. The identification is thus such that we may remove $\N{x}$ and insert edges $\N{y}\N{v}$ and $x\N{v}$ instead.
The former can trivially be drawn without any crossings along the free route between $\N{y}$ and $\N{v}$. The curve for the latter edge is the concatenation of the former curve of $x\N{x}$ and the free
route between $\N{x}$ and $\N{v}$. The number of crossings along the edge $x\N{x}$ (with now $\N{x}=\N{v}$) does thus not change.
We can perform the analogous redrawing when 
identifying $\N{x}$ with $\N{w}$, see Figure~\ref{fig:G'_y_merged}. This establishes claim~(1).

Two nodes $v$ and $x$ can either be on the same side of a cut, or they are on opposite sides. Therefore, 
we create two new subproblems in which $v$ and $x$ are in the same partition set or not, respectively.
In $G_v$ (where we identify $\N{x}$ with $\N{v}$), we have a path of two edges between $v$ and $x$ (namely $v\N{v}$ and $\N{v}x$), both of which are in $F_v$. Thus, $v$ and $x$ have to be in the same partition set, see Figure~\ref{fig:G'_x_merged}.
Conversely, in $G_w$ (where we identify $\N{x}$ with $\N{w}$), we have a path of three edges between $v$ and $x$ (namely $v\N{v}$, $\N{v}\N{w}$, and $\N{w}x$), all of which are in $F_w$.
Thus, $v$ and $x$ have to be in different partition sets, see Figure~\ref{fig:G'_y_merged}.
We can see that the respective constructions do not induce any further restrictions on the set of cuts. In particular, both derived instances
still allow any partition choice between $w$ and $x$, between $w$ and $y$, and between $x$ and $y$. Overall, every feasible cut in $\langle G',F'\rangle$ can be
realized either in $\langle G_v,F_v\rangle$ or in $\langle G_w,F_w\rangle$.

If we know the maximum cut in instance $\langle G_v,F_v\rangle$ and the maximum cut in instance $\langle G_w,F_w\rangle$, we can pick the larger of these two cuts and transfer it back to $\langle G',F'\rangle$.
By applying Lemma~\ref{bisubinvar} twice (once for the bisubdivision of $vw$ at $v$ and once for the bisubdivision of $xy$ at~$x$), the maximum cut in $\langle G',F'\rangle$ induces a maximum cut in $\langle G,F\rangle$ of the same value.
Claim (2) follows.
\end{proof}

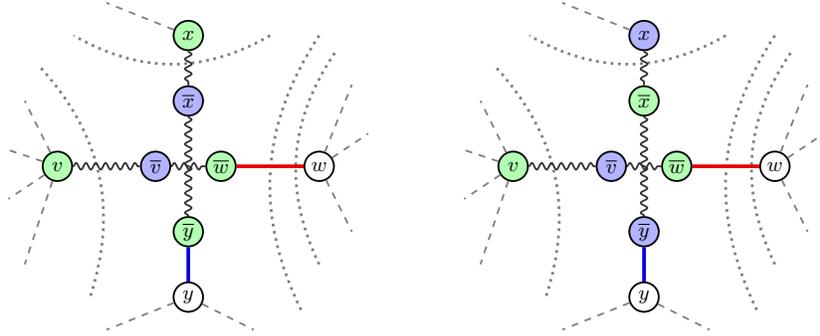
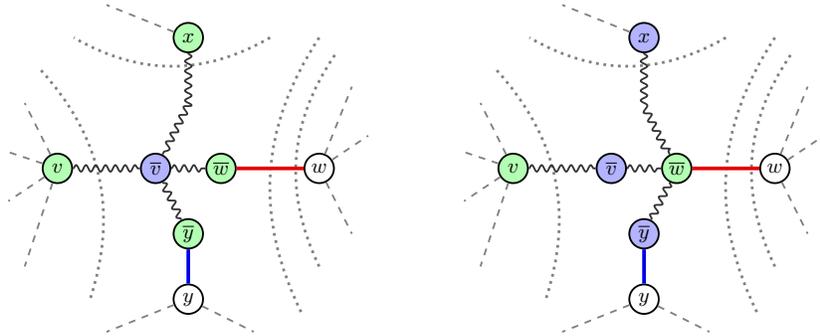
\begin{figure}[tb]
	\centering
	\begin{subfigure}{.49\linewidth}
		\centering		
		\scalebox{0.87}{%
\begin{tikzpicture}
\begin{scope} [thick] 
	\outerNodesColVX
	\otherCrossingEdges
	\innerNodesVX
	\otherEdges
	
	\draw[inCut] (v) -- (a1);
	\draw[inCut] (a1) -- (c1);
	\draw[skewness edge]  (c1) -- (y); 
	
	\draw[inCut] (w1) -- (b1);
	\draw[inCut] (b1) -- (d1);
	\draw[cut edge] (d1) -- (z1);

\end{scope}
\end{tikzpicture} %
}
		\caption{Induced partition in $G'$ with $x$ and $v$ on the same side of the partition.}\label{fig:G'_x}
	\end{subfigure}
	\begin{subfigure}{.49\linewidth}
		\centering		
		\scalebox{0.87}{%
\begin{tikzpicture}
\begin{scope} [thick] 
	\outerNodesColX
	\otherCrossingEdges
	\innerNodesWX
	\otherEdges
	
	\draw[inCut] (v) -- (a1);
	\draw[inCut] (a1) -- (c1);
	\draw[skewness edge]  (c1) -- (y); 
	
	\draw[inCut] (w1) -- (b1);
	\draw[inCut] (b1) -- (d1);
	\draw[cut edge] (d1) -- (z1);
\end{scope}
\end{tikzpicture} %
}
		\caption{Induced partition in $G'$ with $x$ and $v$ on different sides of the partition.}\label{fig:G'_y}
	\end{subfigure}
	
	\vspace*{12pt}
	\begin{subfigure}{.49\linewidth}
		\centering		
		\scalebox{0.87}{%
\begin{tikzpicture}
\begin{scope} [thick] 
	\outerNodesColVX
	\otherCrossingEdges
	\innerNodesVXmerged
	\otherEdges
	
	\draw[inCut] (v) -- (a1);
	\draw[inCut] (a1) -- (c1);
	\draw[skewness edge]  (c1) -- (y); 
	
	\draw[inCut] (w1) -- (2  , 1) -- (a1);
	\draw[inCut] (a1) -- (d1);
	\draw[cut edge] (d1) -- (z1);

\end{scope}
\end{tikzpicture} %
}
		\caption{In $G_{v}$, $\N{x}$ is identified with $\N{v}$.}\label{fig:G'_x_merged}
	\end{subfigure}
	\begin{subfigure}{.49\linewidth}
		\centering		
		\scalebox{0.87}{%
\begin{tikzpicture}
\begin{scope} [thick] 
	\outerNodesColX
	\otherCrossingEdges
	\innerNodesWXmerged
	\otherEdges
	
	\draw[inCut] (v) -- (a1);
	\draw[inCut] (a1) -- (c1);
	\draw[skewness edge]  (c1) -- (y); 
	
	\draw[inCut] (w1) -- (2  , 1) -- (c1);
	\draw[inCut] (c1) -- (d1);
	\draw[cut edge] (d1) -- (z1);
\end{scope}
\end{tikzpicture} %
}
		\caption{In $G_{w}$, $\N{x}$ is identified with $\N{w}$.}\label{fig:G'_y_merged}
	\end{subfigure}
	
	\caption{An illustration of the two cases where $v$ and $x$ are either on the same side of the partition (a/c) or on opposite sides (b/d). In the two graphs $G_{v}$ and $G_{w}$, the crossing was removed while retaining the partition property. The node coloring gives a partition of the nodes that is induced by the newly added edges in $F'$, resp.\ $F_v$ or $F_w$. (Dashed and dotted edges show examples of other edges in $G'$, resp.\ $G_v$ or $G_w$.)} 
	\label{fig:construction2}
\end{figure}

If we are in a base case -- the considered graph is planar -- we can use an efficient \MCA for planar graphs:
\begin{lem}\label{lem:basecase}
Consider a \PFMC instance $\langle G=(V,E,c),F\rangle$ with a planar graph $G$. 
Let $p(|V|)$ be a polynomial upper bound on the running time of a \MCA on the planar graph $G$.
We can compute an optimal solution to $\langle G,F\rangle$ -- or decide that the instance is infeasible -- in $\mathcal{O}(p(|V|))$.	
\end{lem}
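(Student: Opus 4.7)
The plan is to reduce PF-MaxCut on a planar graph to ordinary planar MaxCut by a weight modification that heavily penalizes cuts leaving any fixed edge uncut, and then to invoke the assumed $p(|V|)$-time planar MaxCut algorithm as a black box. All pre- and post-processing will cost only $O(|V|)$, since planarity of $G$ gives $|E|\in O(|V|)$, so the overall running time remains $O(p(|V|))$.

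First I would test feasibility: as noted just after Definition~1, the instance is infeasible if and only if $F$ contains an odd cycle, equivalently the subgraph $(V,F)$ fails to be bipartite. This is decided by a BFS-based 2-coloring of $(V,F)$ in $O(|V|+|F|)=O(|V|)$ time; if infeasible, we return $-\infty$.

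Otherwise, set $M := 1 + 2\sum_{e\in E}|c_e|$ and define modified weights $c'_e := c_e + M$ for $e\in F$ and $c'_e := c_e$ for $e\in E\setminus F$. Run the planar MaxCut algorithm on $(G,c')$ to obtain an optimal cut $\delta(S)$. The core claim is that $F\subseteq\delta(S)$: for every $T\subseteq V$ we have
\[
  c'(\delta(T)) \;=\; c(\delta(T))\;+\;M\cdot|F\cap\delta(T)|,
\]
and by the choice of $M$, any cut $T$ with $|F\cap\delta(T)|<|F|$ is dominated in $c'$-value by any feasible cut (at least one of which exists by the bipartiteness test), because the total variation of $c(\delta(\cdot))$ is bounded by $2\sum_{e\in E}|c_e|<M$. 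Hence every maximizer of $c'$ is feasible, so $F\subseteq\delta(S)$, and $\pfmc(G,F)=c(\delta(S))=c'(\delta(S))-M\cdot|F|$.

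The main (and essentially only) obstacle is calibrating $M$ so that the single-unit gain from cutting one more edge of $F$ strictly beats any adversarial reconfiguration of the non-fixed edges; the factor $2$ accounts for the fact that flipping the partition can simultaneously drag some non-fixed edges into and others out of the cut. We also implicitly rely on the underlying planar MaxCut algorithm being insensitive to the magnitudes of the weights (i.e., truly strongly polynomial in the combinatorial sense), which is the case for the algorithms of~\cite{ShihWuKuo90,LiersP12} cited earlier.
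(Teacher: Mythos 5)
Your proof is correct and follows essentially the same route as the paper: a big-$M$ penalty added to the weights of the fixed edges, followed by a single call to the planar \MCA, with the calibration $M > 2\sum_{e\in E}|c_e|$ ensuring every maximizer of the modified instance cuts all of $F$. The only cosmetic difference is that you detect infeasibility up front via an odd-cycle (bipartiteness) test on $(V,F)$, whereas the paper primarily reads infeasibility off the returned objective value and mentions the odd-cycle check only as an optional speed-up.
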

\begin{proof}
We transform the \PFMC instance into a traditional \MC instance by attaching a large weight to the edges in $F$.
Namely, we add $M$ to the weight of each edge $f\in F$, where $M=2\cdot\sum_{e\in E} |c_{e}|$. 
The omission of a single edge of $F$ from the solution cut (even if picking all other edges of positive weight) will already result in a worse objective value than picking all of $F$ and all edges of negative weight.
The instance is infeasible if and only if the computed cut does not contain all of $F$; this can also be deduced purely by checking whether the objective value is at least $M\cdot |F|+\sum_{e\in E : c_e<0} c_{e}$.
\end{proof}

We proved our lemma above for a general case (by adding $M$ to the weight of each edge in $F$), but in fact we only require a slightly weaker version, since in our algorithm $c_f=0$ for all $f\in F$. Thus it suffices to set $c_f = M$ instead of adding $M$ to $c_f$.
Using any of the currently fastest \MC algorithms for planar graphs~\cite{LiersP12,ShihWuKuo90} leads to $\mathcal{O}(|V|^{3/2}\log |V|)$ time in the above lemma.
We could speed-up infeasibility detection by checking whether $F$ contains a cycle of odd length prior to the transformation; while this only requires
$\mathcal{O}(|V|)$ time via depth-first search, the overall asymptotic runtime for the lemma's claim does of course not improve.

\begin{thm}\label{crFPT}
Let $G=(V,E,c)$ be an edge-weighted graph and $\mathcal{X}$ a crossing configuration of $G$ with $k$ crossings.
Let $p(n)$ be a polynomial upper bound on the running time of a \MCA on planar graphs with $n$ nodes.
We can compute a maximum cut in $G$ in $\mathcal{O}(2^k \cdot p(|V|+k))$ 
time.	
\end{thm}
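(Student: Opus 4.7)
The plan is to prove the statement by induction on $k$, using \Cref{lem:recurse} for the recursive step and \Cref{lem:basecase} for the base case. View the original \MC instance as a \PFMC instance by setting $F=\emptyset$, so that $\pfmc(G,F)$ coincides with the optimal max-cut value of $G$. If $k=0$, then $G$ is planar and we invoke \Cref{lem:basecase} directly, obtaining a cut in time $\mathcal{O}(p(|V|))$.

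For $k\ge 1$, I would pick an arbitrary crossing $\chi\in\mathcal{X}$ between some edges $vw$ and $xy$ and perform the crossing split operation w.r.t.\ $\langle\chi,vw,xy\rangle$, yielding the two triplets $T_v=\langle G_v,F_v,\mathcal{X}_v\rangle$ and $T_w=\langle G_w,F_w,\mathcal{X}_w\rangle$. By \Cref{lem:recurse}(1), each $\mathcal{X}_i$ is a feasible crossing configuration with at most $k-1$ crossings, so the inductive hypothesis applies to both subinstances. By \Cref{lem:recurse}(2), $\pfmc(G,F) = \max_{i\in\{v,w\}} \pfmc(G_i,F_i)$, and the cut attaining the maximum can be lifted to a max-cut of $G$ by reversing the identification and the two bisubdivisions (the latter via \Cref{bisubinvar}).

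To bound the running time, I would track two quantities along the recursion: the depth of the recursion tree and the size of the intermediate graphs. Each crossing split decreases the crossing count by exactly one, so the recursion tree is binary and has depth $k$, yielding at most $2^k$ leaves. Each split introduces four new nodes ($\N{v},\N{w},\N{x},\N{y}$) via the two bisubdivisions and removes one via the identification, so a leaf instance has at most $|V|+3k\in\mathcal{O}(|V|+k)$ nodes (and is planar, since $\mathcal{X}$ is emptied along any root-to-leaf path). At each leaf we invoke \Cref{lem:basecase} in time $\mathcal{O}(p(|V|+k))$; at each internal node the bookkeeping (edge bisubdivision, node identification, rerouting the ordered crossing sets $Y_j$, updating $F$) takes time linear in the current graph size, which is $\mathcal{O}(|V|+k)$ and is dominated by a single leaf call. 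Summing over all $2^k$ leaves yields the claimed bound $\mathcal{O}(2^k\cdot p(|V|+k))$.

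The main conceptual content has already been handled by \Cref{lem:recurse} and \Cref{lem:basecase}; the only subtle bookkeeping step is to verify that the size of the intermediate graphs remains $\mathcal{O}(|V|+k)$ along every root-to-leaf branch, so that the polynomial factor $p$ is evaluated at an argument independent of the current recursion depth. This is immediate from the observation above that each split adds only three new nodes net and the recursion terminates after at most $k$ splits on any branch.
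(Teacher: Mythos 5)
Your proposal is correct and follows essentially the same route as the paper: view the instance as \PFMC with $F=\emptyset$, recurse via the crossing split of Lemma~\ref{lem:recurse}, solve planar leaves via Lemma~\ref{lem:basecase}, and bound the work by $2^k$ leaves each of size $|V|+3k$. Your accounting of the net three extra nodes per split and the observation that internal bookkeeping is dominated by the leaf calls matches the paper's argument, just stated slightly more explicitly.
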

\begin{proof}
As described above, we solve the instance by considering the \PFMC instance $\langle G,F=\emptyset\rangle$ together with $\mathcal{X}$.
Thus the triplet $\langle G,F,\mathcal{X}\rangle$ forms the initial input of our recursive algorithm $\mathcal{R}$.

Algorithm $\mathcal{R}$ proceeds as follows on a given triplet:
If the triplet's graph is planar, we solve $\langle G,F\rangle$ via Lemma~\ref{lem:basecase}.
Otherwise, we use Lemma~\ref{lem:recurse} to obtain two new input triples $T_v, T_w$, for each of which we call $\mathcal{R}$ recursively. 
Their returned solutions (i.p., 
their solution values) induce the optimum solution for the current input triplet. 
However, while the number of crossings decreases by (at 
least) one per recursion step, the graph's size increases by three nodes.

The runtime complexity follows from the fact that we consider two choices per crossing in the given $\mathcal{X}$, and thus construct
$2^k$ graphs. For each such graph, which has $|V|+3k$ nodes, we run the planar \MCA.
\end{proof}

Above, we trivially have $k\in\mathcal{O}(|V|^4)$ and thus $|V|+k\in\mathcal{O}(\mathrm{poly}(|V|))$.
\begin{cor}
The above algorithm is an FPT algorithm with parameter $k$, provided that a crossing configuration $\mathcal{X}$ with $k$ crossings is part of the input.
Moreover, the attained running time is polynomial for any $k\in\mathcal{O}(\log |V|)$.
Using the currently fastest \MCA for planar graphs \cite{LiersP12,ShihWuKuo90}, our algorithm yields a running time of $\mathcal{O}(2^k\cdot (|V|+k)^{3/2}\log (|V|+k))$.
\end{cor}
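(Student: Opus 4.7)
The plan is to derive each of the three assertions directly from Theorem~\ref{crFPT}. First, I would establish the trivial bound $k\in\mathcal{O}(|V|^4)$ stated right above the corollary: a crossing in $\mathcal{X}$ is specified by an unordered pair of edges, so $k\leq\binom{|E|}{2}\leq|V|^4$. From this it follows that $|V|+k\in\mathcal{O}(\mathrm{poly}(|V|))$, and since a polynomial composed with a polynomial is again a polynomial, $p(|V|+k)$ is polynomial in $|V|$.

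With this in hand, the FPT assertion is immediate: Theorem~\ref{crFPT} provides a running time of $\mathcal{O}(2^k\cdot p(|V|+k))$, which by the previous observation is of the form $2^k\cdot\mathrm{poly}(|V|)$, matching the standard definition of fixed-parameter tractability with parameter $k$.

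For the second assertion, I would substitute $k\in\mathcal{O}(\log |V|)$ into the bound from Theorem~\ref{crFPT}. This yields $2^k = 2^{\mathcal{O}(\log|V|)} = |V|^{\mathcal{O}(1)}$, so the exponential factor collapses to a polynomial; combined with the polynomial bound on $p(|V|+k)$, the overall running time becomes polynomial in $|V|$. For the third assertion, one simply instantiates $p(n)=\mathcal{O}(n^{3/2}\log n)$ from the planar \MCA\ of \cite{LiersP12,ShihWuKuo90} into the statement of Theorem~\ref{crFPT}, obtaining $\mathcal{O}(2^k\cdot(|V|+k)^{3/2}\log(|V|+k))$.

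There is no real obstacle here; the corollary is essentially a repackaging of Theorem~\ref{crFPT}. The only point worth verifying is that the crude bound $k\leq|V|^4$ applies to any crossing configuration (not only to those arising from good drawings), but this is immediate from the fact that $\mathcal{X}$ is a subset of edge pairs. Everything else is arithmetic.
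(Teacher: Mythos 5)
Your proposal is correct and follows exactly the paper's (implicit) argument: the corollary is stated without a separate proof, relying precisely on the remark $k\in\mathcal{O}(|V|^4)$ preceding it, the FPT form $2^k\cdot\mathrm{poly}(|V|)$ from Theorem~\ref{crFPT}, the collapse of $2^k$ for $k\in\mathcal{O}(\log|V|)$, and the substitution $p(n)=\mathcal{O}(n^{3/2}\log n)$. Only a cosmetic caveat: a general crossing configuration need not pair up edges at most once, but since one may restrict to good drawings when considering the crossing number (as the paper notes), your crude bound on $k$ stands.
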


Quite sophisticated results by Grohe~\cite{Grohe04} and Karabayashi and Reed~\cite{KR07} show that the problem to compute the crossing number of a graph %
is in FPT (even in linear time) with respect to\ its natural parameterization:
Given a graph $G$ and a number $k\in\mathbb{N}$, we can answer the question ``$\crg(G)\leq k$ ?'' in time $\mathcal{O}(f(k)\cdot n)$.
In case of a yes-instance, we obtain a corresponding crossing configuration $\mathcal{X}$ as a witness. The computable function $f(k)$ is purely dependent on~$k$. %
However, the dependency $f(k)$ is double exponential, and the algorithm far from being practical. Still, these results 
formally allow us to get rid of the requirement that $\mathcal{X}$ is part of the input:
\begin{cor}
\label{crFPTcor}
Given an edge-weighted undirected graph $G$. Computing a maximum cut in $G$ is FPT with parameter~$\crg(G)$.
\end{cor}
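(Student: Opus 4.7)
The plan is to combine the result of Theorem~\ref{crFPT} with the FPT algorithm of Grohe~\cite{Grohe04} and Kawarabayashi and Reed~\cite{KR07}, which, given $G$ and $k\in\mathbb{N}$, decides ``$\crg(G)\leq k$?'' in time $\mathcal{O}(f(k)\cdot |V|)$ for some computable function $f$, and in the yes-case returns a realizing crossing configuration as a witness. The idea is that the algorithm of Theorem~\ref{crFPT} requires a crossing configuration $\mathcal{X}$ as part of its input, but an FPT algorithm in the parameter $\crg(G)$ is allowed to pay a function of $\crg(G)$ to produce such an $\mathcal{X}$ itself.

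First, I would obtain a crossing configuration of minimum size by iteratively invoking the Grohe / Kawarabayashi--Reed algorithm with $k=0,1,2,\ldots$ until it returns a yes-answer. The first yes occurs exactly at $k=\crg(G)$, at which point we receive a crossing configuration $\mathcal{X}$ realizing $\crg(G)$ crossings. The total time spent in this phase is bounded by $\sum_{i=0}^{\crg(G)}\mathcal{O}(f(i)\cdot |V|) = \mathcal{O}(F(\crg(G))\cdot |V|)$, where $F(k):=\sum_{i=0}^{k} f(i)$ is a computable function depending only on $k$.

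Second, I would feed the pair $(G,\mathcal{X})$ into the algorithm of Theorem~\ref{crFPT}. By that theorem, this yields a maximum cut of $G$ in time $\mathcal{O}(2^{\crg(G)}\cdot p(|V|+\crg(G)))$, where $p$ is the polynomial running time for planar \MC. Adding both contributions, the overall running time is
\[
\mathcal{O}\bigl(F(\crg(G))\cdot |V| \,+\, 2^{\crg(G)}\cdot p(|V|+\crg(G))\bigr),
\]
which is of the form $g(\crg(G))\cdot \mathrm{poly}(|V|)$ for a computable function $g$, i.e., the definition of an FPT algorithm with parameter $\crg(G)$.

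There is no real obstacle here: the corollary is essentially a plug-and-play combination. The only point to take care of is that the Grohe / Kawarabayashi--Reed black box is stated as a decision procedure with a witness, so one has to note that iterating it (or, equivalently, invoking it with successively doubled guesses for $k$) legitimately produces both the exact value of $\crg(G)$ and a corresponding configuration $\mathcal{X}$ within the claimed time bound; everything else is inherited from Theorem~\ref{crFPT}.
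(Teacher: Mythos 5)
Your proposal is correct and matches the paper's argument: the paper likewise combines Theorem~\ref{crFPT} with the Grohe / Kawarabayashi--Reed FPT algorithm for ``$\crg(G)\leq k$?'', using its witness crossing configuration as the input $\mathcal{X}$, and your iteration over $k=0,1,2,\ldots$ is just the standard way of making that combination explicit within an $f(\crg(G))\cdot\mathrm{poly}(|V|)$ bound.
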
 %

\section{Minor Crossing Number}
\label{se:minor}

We say $G$ is a \emph{minor} of $H$, denoted by $G \preceq H$,  if $G$ can be obtained from $H$ by deletion and contraction of edges.
The \emph{minor crossing number} of $G$ is given by $\mcr(G)=\min\{\crg(H) \mid G \preceq H \}$.
A \emph{realization} of $\mcr(G)$ is a pair $(H, \mathcal{X})$ with $G \preceq H$ and $\mathcal{X}$ being a crossing configuration of $H$ with $\mcr(G)$ crossings. 
It is easy to see that for graphs $G'$ of maximum degree $3$ we have $\crg(G')=\mcr(G')$. Similarly, any graph $G$ allows a realizing graph $H$ ($\crg(H)=\mcr(G)$) of maximum degree $3$ where vertices of $G$ are replaced by disjoint cubic trees.

By definition we always have $\mcr(G) \leq \crg(G)$; as such $\mcr(G)$ can be a stronger FPT-parameter.
Also, in contrast to crossing number, the minor crossing number is monotone with respect to graph minors, i.e., the family $\{ G \mid \mcr(G) \leq k\}$ is minor closed. Thus, by \cite{ROBERTSON1995}, we can (theoretically) check whether $\mcr(G) \leq k$ in $\mathcal{O}(|V(G)|^3)$ time for fixed $k \in \mathbb{N}$.

Given a connected graph $G$ with $\mcr(G)=k$, we can obtain a graph $H$ from $G$ realizing $\mcr(G)$ in polynomial time as follows:
Choose a node $v$ of degree at least $4$.
Try different pairs of neighbors $w_1,w_2\in N(v)$ until finding the first with  $\mcr(\tilde{G}) \leq k$, where $\tilde{G}$ is obtained from $G$ by splitting $v$ into two nodes $v_1$ and $v_2$ with $N(v_1)=\{v_2,w_1,w_2\}$, $N(v_2)=(N(v)\cup\{v_1\}) \setminus \{w_1,w_2\}$\footnote{Observe that in general this splitting operation may increase $\mcr$; we search for a split (which has to exists) for which it does not increase. Since the split is an inverse minor operation, $\mcr$ can never decrease.}. We call the edge $v_1v_2$ a \emph{split edge}.
Iterating this for each high degree node, yields a graph $H$ of maximum degree $3$ realizing $\mcr(G)=\crg(H)$.
Note that $H$ has at most $\mathcal{O}(|E(G)|)$ nodes.

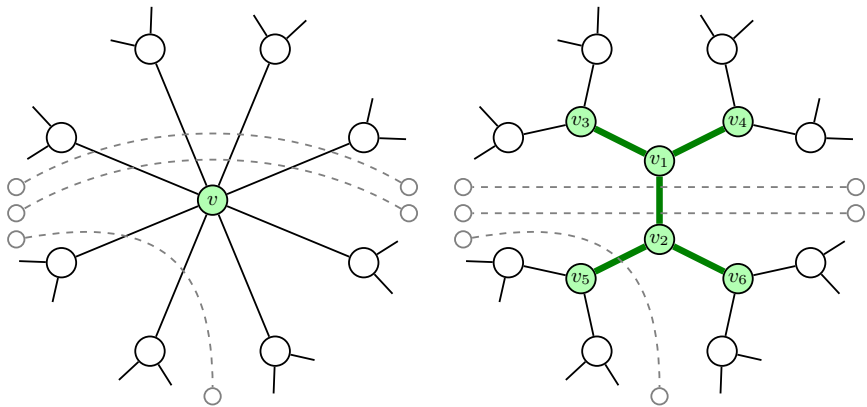
\begin{figure}[tb]
\begin{subfigure}{.49\linewidth}
\centering		
\scalebox{0.87}{%

\begin{tikzpicture}[every node/.style={draw,circle,minimum size=0.45cm, inner sep=0ex}]
\begin{scope} [thick] 
\node[nodeset one] (v) at (0,0) {\small $v$};
\foreach \phi in {1,2,...,8}{
	\node (\phi) at (\phi *360/8+360/16:2.5cm) {};
	\draw (\phi) -- (v);
	\draw (\phi) -- (\phi *360/8+360/16+10:2.9cm) {};
	\draw (\phi) -- (\phi *360/8+360/16-5:3.1cm) {};
	}
	
\node[draw,circle,minimum size=0.25cm, inner sep=0ex,gray] (g1) at (3, .2){};
\node[draw,circle,minimum size=0.25cm, inner sep=0ex,gray] (g2) at (-3,.2){};
\draw (g1) edge[dashed,gray ,in=30, out=150, distance=20mm](g2);

\node[draw,circle,minimum size=0.25cm, inner sep=0ex,gray] (g3) at (3, -.2){};
\node[draw,circle,minimum size=0.25cm, inner sep=0ex,gray] (g4) at (-3,-.2){};
\draw (g3) edge[dashed,gray ,in=30, out=150, distance=20mm](g4);

\node[draw,circle,minimum size=0.25cm, inner sep=0ex,gray] (g5) at (-3, -.6){};
\node[draw,circle,minimum size=0.25cm, inner sep=0ex,gray] (g6) at (0, -3){};
\draw (g5) edge[dashed,gray ,in=90, out=10, distance=20mm](g6);
\end{scope}
\end{tikzpicture} %
}%

\label{fig:minor_G}
\end{subfigure}%
\begin{subfigure}{.49\linewidth}
\centering		
\scalebox{0.87}{%

\begin{tikzpicture}[every node/.style={draw,circle,minimum size=0.45cm, inner sep=0ex}]
\begin{scope} [thick] 
\node[nodeset one] (v1) at (0,.6) {\small $v_1$};
\node[nodeset one] (v2) at (0,-.6) {\small $v_2$};
\node[nodeset one] (v3) at (-1.2,1.2) {\small $v_3$};
\node[nodeset one] (v4) at (1.2,1.2) {\small $v_4$};
\node[nodeset one] (v5) at (-1.2,-1.2) {\small $v_5$};
\node[nodeset one] (v6) at (1.2,-1.2) {\small $v_6$};

\foreach \phi in {1,2,...,8}{
	\node (\phi) at (\phi *360/8+360/16:2.5cm) {};
	\draw (\phi) -- (\phi *360/8+360/16+10:2.9cm) {};
	\draw (\phi) -- (\phi *360/8+360/16-5:3.1cm) {};
	}
	
\draw[line width = 1mm,green!50!black] (v1)--(v2);
\draw[line width = 1mm,green!50!black] (v1)--(v3);
\draw[line width = 1mm,green!50!black] (v1)--(v4);
\draw[line width = 1mm,green!50!black] (v2)--(v5);
\draw[line width = 1mm,green!50!black] (v2)--(v6);

\draw (v3)--(2);
\draw (v3)--(3);
\draw (v4)--(1);
\draw (v4)--(8);
\draw (v5)--(4);
\draw (v5)--(5);
\draw (v6)--(6);
\draw (v6)--(7);

\node[draw,circle,minimum size=0.25cm, inner sep=0ex,gray] (g1) at (3, .2){};
\node[draw,circle,minimum size=0.25cm, inner sep=0ex,gray] (g2) at (-3,.2){};
\draw (g1) edge[dashed,gray ](g2);

\node[draw,circle,minimum size=0.25cm, inner sep=0ex,gray] (g3) at (3, -.2){};
\node[draw,circle,minimum size=0.25cm, inner sep=0ex,gray] (g4) at (-3,-.2){};
\draw (g3) edge[dashed,gray ](g4);

\node[draw,circle,minimum size=0.25cm, inner sep=0ex,gray] (g5) at (-3, -.6){};
\node[draw,circle,minimum size=0.25cm, inner sep=0ex,gray] (g6) at (0, -3){};
\draw (g5) edge[dashed,gray ,in=90, out=10, distance=20mm](g6);
\end{scope}
\end{tikzpicture}  %
}%
\label{fig:minor_H'}
\end{subfigure}%
\caption{Visualization of the split operation to obtain an $\mcr$-realization. Left: part of a graph $G$ with $\crg(G)>\mcr(G)$. Right: part of $\tilde G$ after splitting $v$ five times. Bold green lines denote \emph{split edges}. 
	}
	\label{fig:minor_construction}
\end{figure}

Let $N=-3\cdot\sum_{e\in E(G)} |c_{e}|$. Attaching the weight $N$ to each split edge, we can make sure that these edges are not in any maximum cut of $H$.
Clearly, the cuts in $H$ not containing any split edges are in one-to-one correspondence with cuts in $G$.  
Using \Cref{crFPT}, we obtain an algorithm computing a maximum cut on $G$ parameterized by the $\mcr(G)$. Similarly to Corollary \ref{crFPTcor} we do not require an explicit realization as part of the input (using the above construction method for $H$).

\begin{cor}
(i) Let $G=(V,E,c)$ be an edge-weighted undirected graph with $\mcr(G)=k$, $(H,\mathcal{X})$ a realization of $\mcr(G)$, and $p(n)$ be a polynomial upper bound on the running time of a \MCA on planar graphs with $n$ nodes. 
We can compute a maximum cut in $G$ in $\mathcal{O}(2^k \cdot p(|E(G)|+k))$ time.	

(ii) Given an edge-weighted undirected graph $G$, computing a maximum cut in $G$ is FPT with parameter~$\mcr(G)$.
\end{cor}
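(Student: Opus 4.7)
The plan for~(i) is to reduce \MC on $G$ to a \MC instance on the realizing graph $H$ and then invoke \Cref{crFPT} on $H$. I would extend the weights to $H$ by keeping $c_e$ on every non-split edge (these are in natural bijection with $E(G)$) and assigning the negative weight $N := -3\sum_{e\in E(G)}|c_e|$ to each split edge. Since $G$ is obtained from $H$ by contracting all split edges, every cut $\delta_G(S)$ of $G$ lifts uniquely to a cut of $H$ in which both endpoints of each split edge lie on the same side, preserving the value; conversely, each cut of $H$ avoiding all split edges descends to a cut of $G$ of equal value. Hence the cuts of $G$ and the split-edge-free cuts of $H$ are in value-preserving bijection.

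To see that no maximum cut of $H$ uses a split edge, I would invoke the trivial bound that the maximum cut of $G$ is at least $0$ (attained by the empty partition). Any cut of $H$ containing $t\ge 1$ split edges has value at most $tN + \sum_{e\in E(G)}|c_e| \le -2\sum_{e\in E(G)}|c_e|$, which is strictly negative whenever the weights are not all zero (the all-zero case being trivial). Consequently the maximum cut of $H$ attains the same value as the maximum cut of $G$, and \Cref{crFPT} applied to $(H,\mathcal{X})$ --- together with the bound $|V(H)| \in \mathcal{O}(|E(G)|)$ established in the text preceding the statement --- yields the required running time $\mathcal{O}(2^k\cdot p(|E(G)|+k))$.

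For~(ii), neither $(H,\mathcal{X})$ nor the value $\mcr(G)$ is given. My plan is to first determine $\mcr(G)$ via the fact that $\{G:\mcr(G)\le k\}$ is minor-closed together with Robertson-Seymour's graph-minor testing, trying $k=0,1,2,\dots$ in turn; then construct a realization $H$ via the iterative splitting procedure described in the paragraph above the corollary, with each candidate split validated by a further minor test; and finally compute a crossing configuration $\mathcal{X}$ of size $\crg(H)=k$ via the Grohe / Kawarabayashi-Reed FPT crossing-number algorithm. Each of these preprocessing steps is FPT in $k$ and polynomial in $|V(G)|$, after which part~(i) closes the argument. The main technical point to verify is the weight-separation bound ruling out split edges from optimum cuts of $H$, which the factor $3$ in the definition of $N$ handles with comfortable margin; the more serious concern is the non-uniform, highly impractical nature of the Robertson-Seymour machinery invoked in~(ii), but this mirrors the caveat already noted alongside \Cref{crFPTcor} and is unavoidable with current techniques.
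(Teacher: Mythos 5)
Your proposal is correct and follows essentially the same route as the paper: weight the split edges with the large negative penalty $N=-3\sum_{e\in E(G)}|c_e|$ so that the split-edge-free cuts of $H$ (which correspond value-preservingly to cuts of $G$) contain all maximum cuts, then invoke \Cref{crFPT} on $(H,\mathcal{X})$ with $|V(H)|\in\mathcal{O}(|E(G)|)$; for (ii) you reconstruct $k=\mcr(G)$ and $H$ via minor-closedness and the iterative splitting procedure and obtain $\mathcal{X}$ from the FPT crossing-number algorithm, exactly as the paper does in analogy to \Cref{crFPTcor}. Your explicit justification that no maximum cut of $H$ uses a split edge (via the trivial lower bound $0$ on the maximum cut) merely fills in a detail the paper states without proof.
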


\section{Conclusion and Open Problems}
\label{se:conclusion}
Given a graph together with a feasible crossing configuration with $k$ crossings, we previously only knew
that \MC is polynomial time solvable if $k$ is constant \emph{and} the graph is 1-planar, 
i.e., each edge is involved in at most one crossing. The runtime dependency on $k$ has been to the order of $3^k$~\cite{DKM18}.

Herein, we improved on this in several ways: Firstly, we decreased the dependency on $k$ to the order of $2^k$.
Secondly, we extended the applicability to \emph{any} graph with (at most) $k$ crossings: our parameter
becomes the true crossing number of the graph, without any 1-planarity restriction. 
This shows that \MC is in FPT with respect to\ the graph's crossing number. 
Moreover, we achieve these improvements by introducing simpler ideas than those proposed for the former result, yielding
an overall surprisingly simple algorithm.
\changed{Compared to the result of Kobayashi et al.\cite{KKMT19I}, we are able to stay within the realm of \MC. Finally, our result naturally carries over to the minor crossing number.}
\medskip

The \emph{skewness} of a graph is the minimum number of edges to remove such that the graph becomes planar.
The \emph{genus} of a graph is the minimum oriented genus of a surface onto which the graph can be embedded without crossings.
In FPT research, there are many algorithmic approaches that consider graphs with bounded genus $g$, see, 
e.g.~\cite{BodlaenderFLPST16,ChenKPSX07,EllisFF04,FominLRS11}.
However, the obtained FPT algorithms are typically parameterized by the objective value $z$, or by the combined parameter $(z,g)$.
There are much fewer results that obtain FPT algorithms parameterized purely with~$g$. Notable examples are the graph genus problem 
itself~\cite{MoharGenus} (where $z$ and $g$ coincide by definition),
and the graph isomorphism problem~\cite{IsoGenusKawa} (which generalizes the linear-time algorithm for the problem on planar graphs).
There are even fewer parameterized results with respect to\ skewness; 
the probably best known example is that maximum flow can be solved in the running time of planar graphs, if the graph's skewness is fixed~\cite{HochsteinWeihe07}.
Our above algorithm seems to be the first time that the crossing number has been proposed as an efficient non-trivial 
FPT parameter for any widely known problem.

Besides the weight-restricted case of~\cite{GalluccioLV01} (briefly described in the introduction), it is unclear whether \MC could be FPT
with respect to\ either skewness or genus. We deem this an interesting question for further research.

\bibliography{literatur}
\bibliographystyle{abbrv}

\end{document}